\documentclass{llncs}
\usepackage{fancyhdr}
\usepackage{amsmath}  
\usepackage{amssymb}  
\usepackage{cite}
\usepackage{float}
\usepackage{booktabs}
\usepackage{array}

\usepackage[T1]{fontenc}
\usepackage{graphicx}
\begin{document}
\title{Efficient Encrypted Computation in Convolutional Spiking Neural Networks with TFHE}

%
%

\author{
\makebox[\textwidth][c]{\normalsize
Longfei Guo\inst{1,3},
Pengbo Li\inst{2,3},
Ting Gao\inst{2,3,4}$^{*}$,
Yonghai Zhong\inst{2,3},
Haojie Fan\inst{2,3},
Jinqiao Duan\inst{5,6}
}
}

\institute{\small
\parbox{\textwidth}{\centering\mbox{School of Cyber Science and Engineering, Huazhong University of Science and Technology, Wuhan, China}}
\and
\parbox{\textwidth}{\centering\mbox{School of Mathematics and Statistics, Huazhong University of Science and Technology, Wuhan, China}}
\and
\parbox{\textwidth}{\centering\mbox{Center for Mathematical Science, Huazhong University of Science and Technology, Wuhan, China}}
\and
\parbox{\textwidth}{\centering\mbox{Steklov-Wuhan Institute for Mathematical Exploration, Huazhong University of Science and Technology, China}}
\and
\parbox{\textwidth}{\centering\mbox{Department of Mathematics and Department of Physics, Great Bay University, Dongguan, China}}
\and
\parbox{\textwidth}{\centering\mbox{Guangdong Provincial Key Laboratory of Mathematical and Neural Dynamical Systems, Dongguan, China}}
}

\maketitle              
\thispagestyle{fancy}
\begin{abstract}

With the rapid advancement of AI technology, we have seen more and more concerns on data privacy, leading to some cutting-edge research on machine learning with encrypted computation. Fully Homomorphic Encryption (FHE) is a crucial technology for privacy-preserving computation, while it struggles with continuous non-polynomial functions, as it operates on discrete integers and supports only addition and multiplication. Spiking Neural Networks (SNNs), which use discrete spike signals, naturally complement FHE's characteristics. In this paper, we introduce FHE-DiCSNN, a framework built on the TFHE scheme, utilizing the discrete nature of SNNs for secure and efficient computations. By leveraging bootstrapping techniques, we successfully implement Leaky Integrate-and-Fire (LIF) neuron models on ciphertexts, allowing SNNs of arbitrary depth. Our framework is adaptable to other spiking neuron models, offering a novel approach to homomorphic evaluation of SNNs. Additionally, we integrate convolutional methods inspired by CNNs to enhance accuracy and reduce the simulation time associated with random encoding. Parallel computation techniques further accelerate bootstrapping operations. Experimental results on the MNIST and FashionMNIST datasets validate the effectiveness of FHE-DiCSNN, with a loss of less than 3\% compared to plaintext, respectively, and computation times of under 1 second per prediction. We also apply the model into real medical image classification problems and analyze the parameter optimization and selection.

\keywords{Privacy-preserving \and Fully homomrphic encryption \and  SNNs \and image classification}
\end{abstract}
\section{Introduction}
\subsubsection{Privacy-Preserved AI}

In recent years, privacy preservation has garnered significant attention in the field of machine learning. Fully Homomorphic Encryption (FHE) has become the most suitable tool to facilitate privacy-preserving machine learning (PPML) due to its strong cryptographic security and the ability to perform algebraic operations directly on the ciphertext. The foundation of FHE was established in 2009 when Gentry introduced the first fully homomorphic encryption scheme \cite{gentry2010computing} capable of evaluating arbitrary circuits. His pioneering work not only proposed the FHE scheme but also outlined a method for constructing a comprehensive FHE scheme from a model with limited yet sufficient homomorphic evaluation capacity. Inspired by Gentry's groundbreaking contributions, subsequent second-generation schemes like BGV \cite{brakerski2014leveled} and FV \cite{fan2012somewhat} had been proposed. The evolution of FHE schemes had continued with third-generation models such as FHEW \cite{ducas2015fhew}, TFHE \cite{chillotti2020tfhe}, and Gao \cite{case2019fully}, which provided very efficient bootstrapping operations and make unlimited homomorphic computation operations possible. The CKKS scheme \cite{cheon2019full} had attracted considerable interest as a suitable tool for PPML implementation, given its natural handling of encrypted real numbers.


However, for homomorphic encryption machine learning, existing FHE frameworks only support arithmetic operations such as addition and multiplication, and the widely used activation functions such as ReLu, sigmoid are non-arithmetic functions. Traditional prediction frameworks that use arithmetic functions to approximate non-arithmetic functions are very inefficient. In a distinct study \cite{bourse2018fast}, the authors developed the FHE-DiNN framework, a discrete neural network framework predicated on the TFHE scheme. Unlike traditional neural networks, FHE-DiNN had discretized network weights into integers and utilized the sign function as the activation function. The computation of the sign function had been achieved through bootstrapping on ciphertexts. Each neuron's output had been refreshed with noise, thereby enabling the neural network to extend computations to any depth. Although FHE-DiNN offered high computational speed, it had compromised model prediction accuracy. Given the close resemblance between the sign function and the output of Spiking Neural Network(SNN) neurons, this work provided a compelling basis for investigating efficient homomorphic evaluations of SNNs in the context of PPML.

\subsubsection{CNNs and SNNs}
Convolutional Neural Networks (CNNs) are powerful tools in computer vision, known for their high accuracy and automated feature extraction \cite{dhillon2020convolutional}. CNNs rely on three key concepts—local receptive fields, weight sharing, and spatial subsampling—implemented through convolutional and pooling layers. These components eliminate the need for manual feature extraction and speed up training, making CNNs ideal for visual recognition tasks \cite{voulodimos2018deep}. In recent years, CNNs have been widely used in image classification \cite{simonyan2014very}, Natural Language Processing (NLP) \cite{bradbury2016quasi}, object detection \cite{kim2016pvanet}, and video classification \cite{ballas2015delving}, driving significant progress in deep learning.

Spiking Neural Networks (SNNs), the third generation of neural networks \cite{Maass1996NetworksOS}, operate more like biological systems. Unlike Artificial Neural Networks (ANNs), SNNs process information in both space and time, capturing the temporal dynamics of biological neurons. Neurophysiologists have developed various neuron models for SNNs, such as the Hodgkin-Huxley (H-H) model \cite{Hodgkin1952AQD}, leaky integrate-and-fire (LIF) model \cite{Wu2017SpatioTemporalBF}, Izhikevich model \cite{Izhikevich2003SimpleMO}, and the spike response model (SRM) \cite{jolivetSpikeResponseModel2003}. These models use spikes or "action potentials" to communicate information, closely mimicking brain activity. This time-based processing makes SNNs more effective for handling time-series data compared to traditional neural networks.

Convolution Spiking Neural Networks (CSNNs) combine CNNs' spatial feature extraction with SNNs' temporal processing, enhancing computational efficiency and accuracy. Zhou et al. \cite{zhou2021temporal} developed a sophisticated SNN architecture using the VGG16 model and GoogleNet. Zhang et al. \cite{zhang2021rectified} created a deep CSNN with two convolutional and two hidden layers, using a ReL-PSP-based spiking neuron model and temporal backpropagation. Other CSNNs range from conversions of standard CNNs to architectures that apply backpropagation using rate coding or multi-peak neuron strategies \cite{zhang2020temporal}.

In this paper, we propose the FHE-DiCSNN framework. Based on the efficient TFHE scheme, this framework combines the discrete characteristics of the CSNN model and efficient computation to realize the accurate prediction of ciphertext neural networks. Our contributions are summarized as follows:
\begin{itemize}
    \item Proposed FHE-Fire and FHE-Reset Functions Based on Bootstrapping: We introduced FHE-Fire and FHE-Reset functions to enable LIF neuron computations on ciphertext, providing a novel solution for privacy preservation in SNN.
    \item We enhanced the framework by optimizing encoding methods and bootstrapping computations, ensuring no noise accumulation in deep network extensions while improving computation accuracy and efficiency.
    \item Extensive experiments on public datasets and real-world medical datasets demonstrated the feasibility of efficient ciphertext computations, laying a foundation for practical applications of homomorphic encryption techniques.
\end{itemize}

%

\section{Preliminary Knowledge}

This section provides an overview of bootstrapping in the TFHE scheme and background knowledge on SNNs.

\subsection{ TFHE scheme}
\subsubsection{Notations}
Set $\mathbb{Z}_p = \left\{-\frac{p}{2}+1, \ldots, \frac{p}{2}\right\}$ denote a finite ring defined over the set of integers. The message space for homomorphic encryption is defined within this finite ring $\mathbb{Z}_p$.

Consider $N=2^k$ and the cyclotomic polynomial $X^N+1$, then
$$
R_{q, N} \triangleq R / q R \equiv \mathbb{Z}_q[X] /\left(X^N+1\right) \equiv \mathbb{Z}[X] /\left(X^N+1, q\right).
$$
Similarly, we can define the polynomial ring $R_{p, N}$.

\subsubsection{ FHEW-like Cryptosystem} 
\begin{itemize}
\item \noindent\textbf{LWE (Learning With Errors).}We revisit the encryption form of LWE \cite{regev2009lattices}, which is employed to encrypt a message $m \in \mathbb{Z}_p$ as
$$LWE_s(m)=(\mathbf{a}, b)=\left(\mathbf{a},\langle\mathbf{a}, \mathbf{s}\rangle+e+ \left\lfloor\frac{q}{p} m\right\rceil\right) \bmod q, $$
where $\mathbf{a} \in \mathbb{Z}_q^n$, $b \in \mathbb{Z}_q$, and the keys are vectors $\mathbf{s} \in \mathbb{Z}_q^n$.  


\item \noindent\textbf{RLWE (Ring Learning With Errors)\cite{lyubashevsky2010ideal}.} An RLWE ciphertext of a message $m(X) \in R_{p,N}$ can be obtained as follows:
$$R L W E_s(m(X)) =  \left(a(X), b(X)\right), \text{where } b(X) = a(X) \cdot s(X)+e(X)+ \left\lfloor \frac{q}{p} m(X) \right\rceil,$$
 where $a(X) \leftarrow R_{q,N}$ is uniformly chosen at random, and $e(X) \leftarrow \chi_\sigma^n$ is selected from a discrete Gaussian distribution with parameter $\sigma$. 

\item \noindent\textbf{GSW\cite{gentry2013homomorphic}.}
Given a plaintext $m \in \mathbb{Z}_p$, the plaintext $m$ is embedded into a power of a polynomial to obtain $X^m \in R_{p,N}$, which is then encrypted as $RGSW(X^m)$. RGSW enables efficient computation of homomorphic multiplication, denoted as $\diamond$, while effectively controlling noise growth:
$$
\begin{aligned}
& R G S W\left(X^{m_0}\right) \diamond R G S W\left(X^{m_1}\right)=R G S W\left(X^{m_0+m_1}\right),\\
& R L W E\left(X^{m_0}\right) \diamond R G S W\left(X^{m_1}\right)=R L W E\left(X^{m_0+m_1}\right).
\end{aligned}
$$
\end{itemize}

\subsubsection{Programmable Bootstrapping}
TFHE/FHEW bootstrapping is a core algorithm in FHE that enables the computation of any function $g$ with the properties $g: \mathbb{Z}_p \rightarrow \mathbb{Z}_p$ and $g(v + \frac{p}{2}) = -g(v)$. This function $g$, often referred to as the program function in bootstrapping, defines the transformation applied during the bootstrapping process.

Given an LWE ciphertext $LWE(m)_s = (\mathbf{a}, b)$, where $m \in \mathbb{Z}_p$, $\mathbf{a} \in \mathbb{Z}_p^N$, and $b \in \mathbb{Z}_p$, it is possible to bootstrap it into $LWE_s(g(m))$ with refreshed ciphertext noise. This process effectively reduces noise accumulation, ensuring the ciphertext remains suitable for further homomorphic computations.




\begin{equation*}
    R L W E_z \xrightarrow{\text{ Sample Extraction}} L W E_z(g(m))=\left(\mathbf{a}, b_0\right),
    \label{Sample_Extraction}
\end{equation*}
where $\mathbf{a}=\left(a_0, \ldots, a_{N-1}\right)$ is the coefficient vector of $a(X)$, and $b_0$ is the coefficient of $b(X)$.



Given a program function $g$, bootstrapping is a process that takes an LWE ciphertext $LWE_s(m)$ as input and outputs $LWE_s(g(m))$ with the original secret key $s$:
\begin{equation*}
    \text { bootstrapping }\left(L W E_s(m)\right)=L W E_s(g(m)) \text {. }
\end{equation*}

\subsection{Leaky Integrate-and-Fire Neuron Model and SNNs}

Neurophysiologists have developed several models to describe the dynamic behavior of neuronal membrane potentials, crucial for understanding the fundamental properties of SNNs. Among these, LIF model \cite{wu2018spatio} is widely used due to its simplicity and effectiveness in capturing essential neuronal dynamics, such as leakage, accumulation, and threshold excitation. The LIF model is expressed as:
\begin{equation}
    \tau \frac{\mathrm{d} V}{\mathrm{~d} t} = V_{\text{rest}} - V + RI,
\end{equation}
where $\tau$ represents the membrane time constant, $V_{\text{rest}}$ is the resting potential, $R$ is the membrane impedance, and $I$ is the input current. This model can be discretized as:
\begin{equation}
\left\{
\begin{aligned}
    &H[t] = V[t-1] + \frac{1}{\tau}(-(V[t-1] - V_{\text{reset}}) + I[t]), \\
    &S[t] = \text{Fire}(H[t] - V_{\text{th}}), \\
    &V[t] = \operatorname{Reset}(H[t]) = \begin{cases}
       V_{\text{reset}}, & \text{if } H[t] \geq V_{\text{th}}, \\
       H[t], & \text{if } V_{\text{reset}} \leq H[t] \leq V_{\text{th}}, \\
       V_{\text{reset}}, & \text{if } H[t] \leq V_{\text{reset}}.
    \end{cases}
\end{aligned}\right.
\label{LIFnode}
\end{equation}
where  $H[t]$ represents the membrane potential, $S[t]$ the spike function, and $V_{\text{th}}$ the firing threshold. The $\text{Fire}(\cdot)$ function is a step function. The input current $I[t]$ is the weighted sum of external inputs, such as those from pre-synaptic neurons or image pixels, denoted as $I[t] = \sum_j \omega_{j} x_j[t]$. These inputs can be processed using either convolutional or fully connected layers in a neural network.

Training SNNs is challenging due to the non-differentiability of spikes \cite{zhang2018plasticity}. Traditional backpropagation can't be directly applied, leading to alternative training methods like ANN-to-SNN conversion, unsupervised Spike-Timing-Dependent Plasticity (STDP), and gradient surrogate methods. We adopt a gradient surrogate approach, using continuous functions to approximate the derivative of spikes for backpropagation.

To handle diverse input patterns, different encoding methods are used, such as rate coding, temporal coding, and population coding \cite{georgopoulos1986neuronal}. In this study, Poisson encoding was used to convert input data into spike trains using a Poisson process. For a time interval $\Delta t$ divided into $T$ steps, a random matrix $M_t$ with values in $[0, 255]$ is generated. Each time step $t$, the normalized pixel matrix $X_o$ is compared with $M_t$ to determine spike occurrences. The final spike encoding $X$ is:

\[
X(i,j) =
\begin{cases}
0, & \text{if } X_o(i,j) \leq M_t(i,j), \\
1, & \text{if } X_o(i,j) > M_t(i,j),
\end{cases}
\]

where $i$ and $j$ are pixel coordinates. This encoding aligns with a Poisson distribution, effectively adapting input stimuli for SNNs.

\section{Proposed Methods}
\subsection{Convolutional Spiking Neural Networks}
In SNN neural networks, Poisson coding is a simple stochastic coding technique used to convert continuous signals, such as images, into impulse signals, which are thus suitable as inputs to SNN neural networks. However, Poisson coding itself lacks the ability to extract image features. Due to its stochastic nature, signal coding using Poisson coding requires a large number of impulse samples to encode the complete information, which in turn increases the computational cost and time overhead.

Convolutional Spiking Neural Network(CSNN) is a neural network model that combines CNN and SNN, which replaces the use of Poisson coding. In CSNN, the previously mentioned LIF model or other spiking models are used to simulate the electrical activity of the neurons, forming the spiking activation layer in the network, while the convolutional layer of CSNN is used to extract image features and encode them as spiking signals. By utilising the spatial feature extraction capability of CNNs and the spiking properties of SNNs, CSNNs can benefit from both convolutional operations and discrete spiking.A visualisation of the CSNN is shown in Figure (\ref{CSNN_main2}).


\begin{figure}[htbp]
    \centering
		\centering
		\includegraphics[width=1\textwidth]{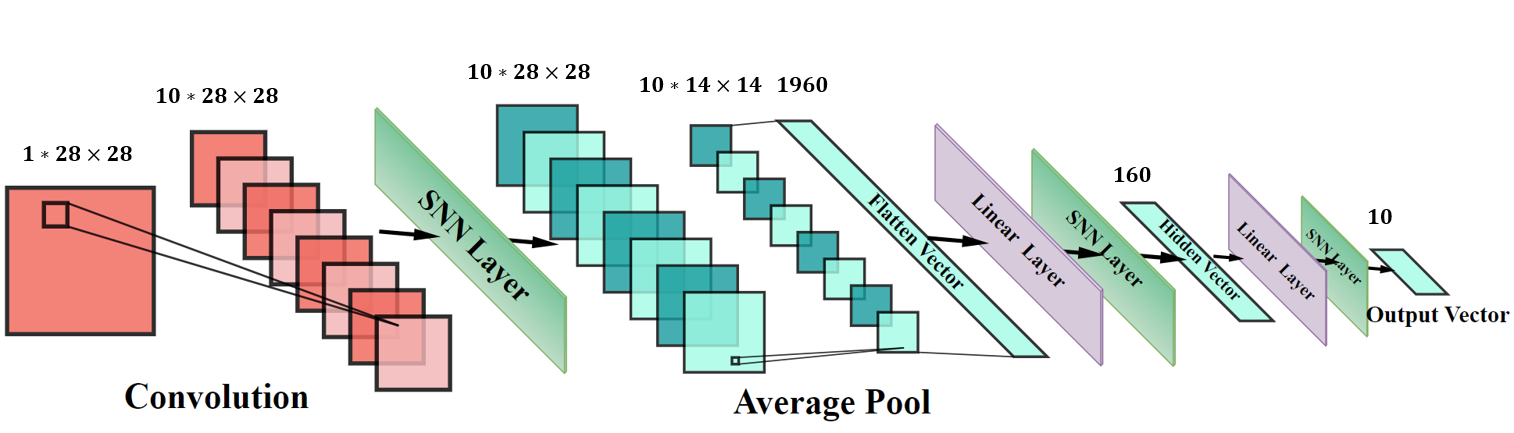}
        \caption{Schematic visualisation of the CSNN network. For the network, we use a convolutional layer with kernal\_size = 3, stridge = 1,padding=1, and an average pooling layer with kernal\_size = 2, stridge = 2, padding=0.}
    \label{CSNN_main2}
\end{figure}

\subsection{Discretized CSNN}
CSNN relies on real numbers for computation, and the outputs of neurons and network weights are represented as continuous values. However, homomorphic encryption algorithms require the computation to be defined over a ring of integers. Therefore, the outputs and weights of the neural network must be discretised into integers. Discretized CSNNs are characterised by the fact that the neuron outputs consist fundamentally of discrete-valued signals and only the weights need to be discretised. From this perspective, CSNNs are more suitable for homomorphic computation than conventional neural networks.

  
We utilize fixed-precision real numbers and apply suitable scaling to convert the weights into integers, effectively discretizing CSNNs into DiCSNNs(Discretized Convolutional Spiking Neural Network). Denote this discretization method as the following function:

$$
\hat{x} \triangleq \operatorname{Discret}(x, \theta)=\lfloor x \cdot \theta\rceil,
$$
where $\theta \in \mathbb{Z}$ is referred to as the scaling factor, and $\lfloor \cdot \rceil$ represents rounding to the nearest integer. The discretized result of $x$ is denoted as $\hat{x}$. Within the encryption process, all relevant numerical values are defined on the finite ring $\mathbb{Z}_p$. Hence, it is imperative to carefully monitor the numerical fluctuations throughout the computation to prevent reductions modulo $p$, as such reductions could give rise to unanticipated errors in the computational outcomes.


Despite scaling and discretizing the weights, several challenges remain in achieving a fully homomorphic encryption-compatible CSNN (FHE-CSNN):  

1) The original equations of LIF neurons (Equation \ref{LIFnode}) and the average pooling layer involve division operations, which are incompatible with homomorphic computation. Alternative approaches are necessary to eliminate explicit division while maintaining FHE compatibility.  

2) The Fire function, representing the step function in LIF neurons, depends on the bootstrap function $g$ in TFHE. This function must satisfy the condition $g(x) = -g\left(\frac{p}{2} + x\right)$. The Sign function meets this requirement and is worth considering as a substitute for the Fire function.  

To address these challenges, we introduce Theorem \ref{discret_LIF}, which explains the iterative process of discrete LIF neurons.

\begin{theorem}\label{discret_LIF}
Under the given conditions $V_{reset} = 0$ and $V_{th} = 1$, Eq.(\ref{LIF_equation}) can be discretized into the following equivalent form with a discretization scaling factor $\theta$:

\begin{equation}
\left\{
\begin{aligned}
& \hat{H}[t]=\hat{V}[t-1]+ \hat{I}[t], \\
& 2S[t]=\text { Sign }\left(\hat{H}[t]-\hat{V}_{\text {th}}^{\tau}\right) +1, \\
& \hat{V}[t]=\operatorname{Reset}(\hat{H}[t])=\left\{\begin{array}{rll}
\hat{V}_{\text {reset }}, & \text { if } \hat{H}[t] \geq \hat{V}_{\text {th}}^{\tau},;\\
\left\lfloor\frac{\tau-1}{\tau}\hat{H}[t]\right\rceil, & \text { if } \hat{V}_{\text {reset }} \leq \hat{H}[t]<\hat{V}_{\text {th }}^{\tau};\\
\hat{V}_{\text {reset }}, & \text { if } \hat{H}[t] \leq \hat{V}_{\text {reset }} .
\end{array}\right.
&
\end{aligned}\right.
\label{DiscretEquation}
\end{equation}

Here, the hat symbol represents the discretized values, and $\hat{V}_{\text {th}}^{\tau} = \theta \cdot \tau$. Moreover $\hat{I}[t]=\sum \hat{\omega}_{j} S_j[t].$
\end{theorem}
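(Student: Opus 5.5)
The plan is to start from the discretized LIF recursion in Eq.~(\ref{LIFnode}) with $V_{\text{reset}}=0$ and $V_{\text{th}}=1$. The idea is to rescale the state variable so that the division by $\tau$ disappears from the charging step and is pushed entirely into the reset step.

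\textbf{Step 1: rescale.} With $V_{\text{reset}}=0$ the charging equation reads
\[
H[t]=\tfrac{\tau-1}{\tau}V[t-1]+\tfrac{1}{\tau}I[t].
\]
Multiplying by $\tau$, define $H'[t]=\tau H[t]$ and $V'[t-1]=(\tau-1)V[t-1]$. This gives
\[
H'[t]=V'[t-1]+I[t].
\]
The threshold condition $H[t]\ge 1$ becomes $H'[t]\ge \tau$.

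\textbf{Step 2: rewrite the reset in the new variables.} In the non-firing branch we have $V[t]=H[t]$. Hence the quantity entering the next step is
\[
V'[t]=(\tau-1)H[t]=\tfrac{\tau-1}{\tau}H'[t].
\]
In the other two branches $V'[t]=0=V_{\text{reset}}$. So the factor $\frac{\tau-1}{\tau}$ moves into the middle branch of Reset, exactly as in Eq.~(\ref{DiscretEquation}). This step is an exact algebraic identity, requiring only a consistent convention at the boundary $H=V_{\text{th}}$. I expect the statement's strict inequality to match the firing branch, since $\geq$ fires.

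\textbf{Step 3: discretize.} Multiply everything by the scaling factor $\theta$ and round: $\hat H=\lfloor\theta H'\rceil$, $\hat V=\lfloor\theta V'\rceil$, $\hat\omega_j=\lfloor\theta\omega_j\rceil$. Since the inputs $S_j[t]\in\{0,1\}$ are spikes, we get
\[
\hat I[t]=\sum_j\hat\omega_j S_j[t],
\]
and the rescaled threshold is $\hat V_{\text{th}}^{\tau}=\theta\tau$, with $\hat V_{\text{reset}}=0$. The middle branch is then realized by rounding, $\lfloor\frac{\tau-1}{\tau}\hat H[t]\rceil$, which keeps the state integral.

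\textbf{Step 4: the spike.} Since $\text{Fire}(x)\in\{0,1\}$ is the step function and $\text{Sign}(x)\in\{-1,1\}$, we have the identity
\[
2\,\text{Fire}(x)=\text{Sign}(x)+1,
\]
with the convention $\text{Sign}(0)=1$. Applying it to $x=\hat H[t]-\hat V_{\text{th}}^{\tau}$ gives the second line of Eq.~(\ref{DiscretEquation}). This form is what makes the statement compatible with the bootstrapping constraint $g(v+\frac p2)=-g(v)$.

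\textbf{Main obstacle.} The hard part is the meaning of ``equivalent''. After rounding, the recursion coincides with the real one only up to the quantization error of weights and state. So I would state equivalence as exact equality of the rescaled real recursion (Steps 1, 2 and 4) and treat Step 3 as the defining discretization. If a quantitative statement is wanted, I would bound the per-step error by $O(\sum_j|S_j|/2+1/2)$ in units of $1/\theta$. Since $\frac{\tau-1}{\tau}<1$ contracts the state, this error stays bounded over $t$ rather than accumulating.
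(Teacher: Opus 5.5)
Your proposal is correct and follows essentially the paper's proof: multiply the charging equation by $\tau$, take $\tau H[t]$ and $(\tau-1)V[t-1]$ as the new iteration variables so that the factor $\frac{\tau-1}{\tau}$ moves into the middle branch of Reset, replace Fire by $\frac{1}{2}(\operatorname{Sign}+1)$, and then scale by $\theta$ with rounding, which the paper, like you, treats as a defining discretization rather than an exact identity. Your optional error estimate would not go through as stated: the contraction by $\frac{\tau-1}{\tau}$ controls the rounding error only while the real and discretized trajectories make the same threshold decisions (and is absent in the IF limit $\tau=\infty$), whereas a single flipped spike near $\hat V_{\text{th}}^{\tau}$ creates a state discrepancy of the order of the threshold itself, not of the rounding scale.
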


\begin{proof}
We multiply both sides of Equations of the LIF model(Eq.(\ref{LIF_equation})) by $\tau$ and the Sign function  has been substituted for the Fire function, which yields the following equations:

\begin{equation*}
\left\{
\begin{aligned}
\tau H[t] & = (\tau - 1)V[t-1] + I[t] ,\\
2S[t] & =\text { Sign }\left(\tau H[t]-V_{\text {th }}^{\tau}\right) + 1, \\
(\tau - 1)V[t] & =\operatorname{Reset}(\tau H[t])=\left\{\begin{array}{rll}
0, & \text { if } & \tau H[t] \geq V_{\text {th}}^{\tau}; \\
\frac{\tau-1}{\tau}(\tau H[t]), & \text { if } & 0 \leq \tau H[t] \leq V_{\text {th}}^{\tau}; \\
0, & \text { if } & \tau H[t] \leq 0 .
\end{array}\right.
\end{aligned}\right.
\end{equation*}
Then, we treat $\tau H[t]$ and $(\tau-1)V[t-1]$ as separate iterations objects. Therefore, we can rewrite $\tau H[t]$ as $H[t]$ without ambiguity, as well as $(\tau-1)V[t-1]$.

Finally, by multiplying the corresponding discretization factor $\theta$, we obtain Eq\ref{DiscretEquation}. Note that since the division operation has been moved to the Reset function, rounding is applied during discretization.

\end{proof}

In Eq.(\ref{DiscretEquation}), the Leaky Integrate-and-Fire (LIF) model degenerates into the Integrate-and-Fire (IF) model when $V_{\text{th}}^{\tau} = 1$ and $\tau = \infty$. To facilitate further discussions, we will refer to this set of equations as the LIF(IF) function.
\begin{equation}\label{LIF_IF}
\begin{aligned}
    2S[t] &= LIF(I[t]),\\
    2S[t] &= IF(I[t]).
\end{aligned}
\end{equation}


\par\par

LIF(IF) model twice spike signals in Eq (\ref{LIF_IF}). If left unaddressed, the next Spiking Activation Layer would receive twice the input. To tackle this issue, we propose a \textbf{multi-level discretization} method that can also resolve the division problem in average pooling.

Suppose the input signal is given by  

\begin{equation}\label{wS}
\begin{aligned}
    I[t] = \sum \omega_j S_j[t], 
\end{aligned}
\end{equation}
 
where $\omega_j $represents the weight parameter.  

For the input to the next layer after the LIF neuron, the normal singular input can be restored by adjusting the scaling factor as follows:  

\begin{equation}\label{dwS}
\begin{aligned}  
\hat{I}[t] & = \sum \operatorname{Discret}(\omega_j, \frac{\theta}{2}) \cdot 2S_j   
\approx \theta \cdot I[t].  
\end{aligned}
\end{equation} 

For the division involved in the pooling layer, the scaling factor is adjusted from $\theta$ to $\theta / n$ as follows:  

  \begin{equation}\label{nwS}
\begin{aligned}  
\hat{I}[t] = \sum_j \operatorname{Discret}(\omega_j, \frac{\theta}{n}) \cdot \left(\sum_k S_k[t]\right) 
\approx \theta \cdot I[t],  
\end{aligned}  
\end{equation}   
where $n$ represents the divisor in the average pooling layer.  

By leveraging Theorem \ref{discret_LIF} and the multilevel discretization method, we can effectively address both the discretization and division challenges, ensuring smooth computation of CSNNs on ciphertexts.





\subsection{Homomorphic Evaluation of DiCSNNs}
In the framework of FHE-DiCSNN, homomorphic encryption computation is reduced to two main processes: the ciphertext computation for WeightSum and the ciphertext computation for the LIF neuron model.

\textbf{Ciphertext Computation for WeightSum.} In this process, user data such as input images are encrypted, while model parameters like weights remain publicly accessible on the server side. Essentially, FHE-WeightSum represents the dot product between the plaintext weight vector and the ciphertext vector of the input layer. This computation is described as:  
$$  
\sum \hat{\omega}_{j} LWE(x_j) = LWE\left(\sum \hat{\omega}_{j} x_j\right).  
$$  

This equation holds under the following conditions: 

1) $\sum_j \hat{\omega}_{j} x_j \in \left[-\frac{p}{2}, \frac{p}{2}\right)$: This is satisfied by choosing a sufficiently large information space $\mathbb{Z}_p$.  

2) The noise remains within permissible bounds: This requires selecting an appropriate initial noise $\sigma$ to ensure that the weighted noise $\sum_j \left|\hat{\omega}_{j} \cdot \sigma\right|$ does not exceed the boundary.  

\textbf{Ciphertext Computation for LIF Neurons.}  
The \textbf{Fire} and \textbf{Reset} functions in Equation (ref{DiscretEquation}) are non-polynomial functions, necessitating the use of programmable bootstrap methods from Theorem (ref{Programtfhe}).  

For the FHE-Fire function, we define $g(m)$ as:  
$$  
g(m) =  
\begin{cases}  
1, & \text{if } m \in \left[0, \frac{p}{2}\right); \\  
-1, & \text{if } m \in \left[-\frac{p}{2}, 0\right).  
\end{cases}  
$$  

The ciphertext computation for FHE-Fire is then given by:  
\begin{equation}
\begin{aligned}  
2 \, \text{FHE-Fire}(\text{LWE}(m)) &= \text{bootstrap}(\text{LWE}(m)) + 1 \\  
&= \begin{cases}  
\text{LWE}(2), & \text{if } m \in \left[0, \frac{p}{2}\right); \\  
\text{LWE}(0), & \text{if } m \in \left[-\frac{p}{2}, 0\right).  
\end{cases} \\  
&= \text{LWE}(\operatorname{Sign}(m) + 1) \\  
&= \text{LWE}(2 \cdot \text{Spike}).  
\end{aligned}    
\end{equation}

For the FHE-Reset function, we define $g(m)$ as:  
$$  
g(m) \triangleq  
\begin{cases}  
0, & \text{if } m \in \left[\hat{V}_{\text{th}}, \frac{p}{2}\right); \\  
\left\lfloor\frac{\theta-1}{\theta} m\right\rceil, & \text{if } m \in \left[0, \hat{V}_{\text{th}}\right); \\  
0, & \text{if } m \in \left[\hat{V}_{\text{th}} - \frac{p}{2}, 0\right); \\  
\frac{p}{2} - \left\lfloor\frac{\theta-1}{\theta} m\right\rceil, & \text{if } m \in \left[-\frac{p}{2}, \hat{V}_{\text{th}} - \frac{p}{2}\right).  
\end{cases}  
$$  

Both functions produce the desired encrypted results. However, if $m$ falls within the interval $\left[-\frac{p}{2}, \hat{V}_{\text{th}} - \frac{p}{2}\right)$, the FHE-Reset function may yield incorrect results. To prevent this, it is essential to ensure that $\hat{H}[t]$ does not enter this range. The following theorem demonstrates that this condition is easily satisfied. 

\begin{theorem}
\label{MAXMIN}    
If $\operatorname{M} \triangleq \hat{V}_{\text {th}}+\max _t(|\hat{I}[t]|)$ and $\operatorname{M} \leq \frac{p}{2}$, then $\hat{H}[t] \in\left[\hat{V}_{\text {threshold }}-\frac{p}{2}, \frac{p}{2}\right)$.
\end{theorem}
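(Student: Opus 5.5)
The plan is to bound $\hat{H}[t]$ from both sides, using only the recursion $\hat{H}[t]=\hat{V}[t-1]+\hat{I}[t]$ from Theorem \ref{discret_LIF} and the fact that the Reset function keeps the membrane potential in a bounded range. Here $\hat{V}_{\text{threshold}}$ is read as $\hat{V}_{\text{th}}$. The key intermediate claim is
\[
0\le \hat{V}[t]\le \hat{V}_{\text{th}}\quad\text{for every } t,
\]
with the initial value $\hat{V}[0]=\hat{V}_{\text{reset}}=0$.

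\textbf{Step 1 (invariant on $\hat{V}$).} I would prove the claim by induction on $t$, using the case description of Reset in Eq.~(\ref{DiscretEquation}).
\begin{itemize}
\item In the first and third branches, $\hat{V}[t]=\hat{V}_{\text{reset}}=0$.
\item In the middle branch, $0\le \hat{H}[t]<\hat{V}_{\text{th}}$. Since $0\le\frac{\tau-1}{\tau}<1$, the rounded value $\lfloor\frac{\tau-1}{\tau}\hat{H}[t]\rceil$ lies in $[0,\hat{V}_{\text{th}}]$.
\end{itemize}
The rounding is the only delicate point. Because $\hat{H}[t]$ and $\hat{V}_{\text{th}}$ are integers, $\hat{H}[t]\le \hat{V}_{\text{th}}-1$. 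Hence $\frac{\tau-1}{\tau}\hat{H}[t]<\hat{V}_{\text{th}}-\tfrac12$ is not automatic, but rounding can never exceed $\hat{H}[t]\le\hat{V}_{\text{th}}$, which suffices. This step matches what the plaintext iteration actually computes. We only need it assuming the encrypted Reset was correct at earlier times, which is exactly what the induction provides.

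\textbf{Step 2 (bounds on $\hat{H}$).} Combining the recursion with Step 1 and writing $A=\max_t|\hat{I}[t]|$:
\[
\hat{H}[t]\ \ge\ 0-A\ =\ \hat{V}_{\text{th}}-M\ \ge\ \hat{V}_{\text{th}}-\tfrac p2,
\]
\[
\hat{H}[t]\ \le\ \hat{V}_{\text{th}}+A\ =\ M\ \le\ \tfrac p2 .
\]

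\textbf{Step 3 (main obstacle: upper endpoint).} The statement asks for a half-open interval with right endpoint $\frac p2$ excluded, but Step 2 only gives $\hat{H}[t]\le \frac p2$. The first thing to try is to sharpen Step 1. Whenever $\hat{V}[t]\neq0$ it comes from the middle branch, so $\hat{V}[t]\le \hat{H}[t]\le \hat{V}_{\text{th}}-1$. This gives $\hat{V}[t]\le\hat{V}_{\text{th}}-1$ in all cases, and therefore $\hat{H}[t]\le M-1<\frac p2$. Here I use that $\hat{V}_{\text{th}}\ge1$ and that every quantity is an integer.

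\textbf{Step 4 (conclusion).} The bounds keep $\hat{H}[t]$ inside $\mathbb{Z}_p$ without any wrap-around modulo $p$. This justifies identifying the integer value with its representative, so $\hat{H}[t]\in[\hat{V}_{\text{th}}-\frac p2,\frac p2)$. In particular $\hat{H}[t]$ never falls in the bad interval $[-\frac p2,\hat{V}_{\text{th}}-\frac p2)$ of FHE-Reset, which closes the induction together with Step 1.
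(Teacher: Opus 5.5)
Your proposal is correct and follows the paper's route. Like the paper, you bound $\hat{V}[t-1]$ between $0$ and a value strictly below $\hat{V}_{\text{th}}$, then add $\pm\max_t|\hat{I}[t]|$ and use $M\le\frac{p}{2}$. The one difference is how you get strictness at $\frac{p}{2}$: the paper uses $\hat{V}[t]\le\frac{\tau-1}{\tau}\hat{V}_{\text{th}}<\hat{V}_{\text{th}}$, which glosses over the rounding in Reset and becomes an equality in the IF case $\tau=\infty$, whereas your integrality bound $\hat{V}[t]\le\hat{V}_{\text{th}}-1$, proved by explicit induction, handles both.
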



\begin{proof}
\begin{align}
\max (\hat{H}[t]) & = \max (\hat{V}[t] + \hat{I}[t]) \leq \frac{\tau-1}{\tau} \hat{V}_{th} + \max_t(|\hat{I}[t]|) < \operatorname{M} \leq \frac{p}{2} \label{maxH} \\
\min (\hat{H}[t]) & = \min (\hat{V}[t] + \hat{I}[t]) \geq -\max_t(|\hat{I}[t]|) \geq -\frac{p}{2} + \hat{V}_{th} \geq -\frac{p}{2} \label{minH}
\end{align}
\end{proof}

The above theorem states that as long as $\operatorname{M} \leq \frac{p}{2}$, the maxima and minima of $\hat{H}[t]$ fall within the interval $\left[\hat{V}_{\text{threshold}} - \frac{p}{2}, \frac{p}{2}\right)$. It can be easily shown that the maximum value generated during CSNN computation must occur in the variable $\hat{H}[t]$.  This result not only validates the correctness of the FHE-Reset function but also facilitates the estimation of the maximum value within the information space. Furthermore, it provides a practical guideline for selecting the parameter $p$ for the information space.  

Additionally, the Bootstrapping procedure refreshes the ciphertext noise, ensuring that subsequent layers of the CSNN inherit the same initial noise level. This capability enables the network to extend to arbitrary depths without concerns about noise accumulation.

\section{Experiments}
In this chapter, we will empirically demonstrate the superior performance of FHE-DiCSNN in terms of time efficiency and accuracy. Firstly, we theoretically analyse the maximum value of FHE-DiCSNN as well as the relationship between the maximum value of the information space and the maximum noise growth with the discretisation factor $\theta$. Secondly, we will experimentally evaluate the actual accuracy and time efficiency of FHE-DiCSNN under different combinations of the attenuation coefficient $\tau$ and the discretisation coefficient $\theta$ with different datasets.

\subsection{Experimental Setup}
\textbf{Datasets.} We use two image datasets in our experiments:

1) MNIST. This dataset includes 70,000 28×28 grayscale images of handwritten digits (0-9). It is split into 60,000 training and 10,000 testing images, with the goal of correctly classifying each handwritten digit.

2) FashionMNIST: This dataset contains 70,000 28×28 grayscale images across 10 clothing categories (e.g., T-shirts, trousers). It is divided into 60,000 training and 10,000 testing images, with the task of accurately identifying the clothing category in each image.

3)Tumor Brain Tumor Classification Dataset: This preprocessed dataset consists of 7,023 28×28 grayscale brain tumor images, classified into four categories: glioma, meningioma, no tumor, and pituitary. It is split into 5,712 training and 1,311 testing images, aiming to determine the presence of a brain tumor based on brain imaging data.\\
\\
\textbf{Baselines.} The performance comparison is made between Plaintext CSNN (Convolutional Spiking Neural Network) as figure \ref{CSNN_main2} showed trained on plaintext data and evaluated on a plaintext test set, and the FHE-DiCSNN (Fully Homomorphic Encryption Distributed Convolutional Spiking Neural Network) evaluated on encrypted data. \\
\\
\textbf{Implementaion Details.} We trained the model using the Adam optimizer with batch size of 256, an initial learning rate of 0.001, and simulations over 4 time-steps. The privacy-preserving framework was implemented in C++ using the TFHE library\footnote{https://github.com/tfhe/tfhe} for homomorphic encryption. During encrypted inference, the LWE parameters were set with ciphertext dimension of 128, TLWE polynomial size of $2^{15}$, and a plaintext message space of \(2^{14}\). We tested various combinations of the attenuation parameter $\tau$ and the discretization scaling factor $\theta$.


\subsection{Experimental Results}

In our experiments, image data were encrypted using LWE ciphertext. The encrypted images were then processed by multiplying them with discretized weights, which were passed through the Spiking Activation Layer. Within this layer, the LIF neuron model was applied to the ciphertext, performing FHE-Fire and FHE-Reset operations. Bootstrapping was optimized using FFT technology and parallel computing. This process (steps 1 to 3) was repeated $T$ times, with the resulting outputs accumulated to generate classification scores. After completing these steps, the ciphertext was decrypted, and the class with the highest score was selected as the final classification result. We evaluated various combinations of the decay parameter $\tau$ and scaling factor $\theta$, with the outcomes detailed in Table \ref{MNISTprediction} and Table \ref{FashionMNISTpredection}.

\begin{table}[h]
    \centering
    \caption{Different $\theta$ and $\tau$ parameters on the MNIST dataset (\%)}
    \setlength{\tabcolsep}{15pt} 
    \begin{tabular}{c c c c c >{\centering\arraybackslash}m{3cm} >{\centering\arraybackslash}m{3cm}}
        \toprule
         & $\tau = 2.0$  & $\tau = 3.0$ & $\tau = 4.0$  & $\tau = \infty$ (IF) \\
        \midrule
        $\theta = 20$  & 97.48 & 97.60 & 97.10 & 97.06  \\
        $\theta = 30$  & 97.66 & 97.50 & 97.60 & 97.23 \\
        $\theta = 40$  & 97.60 & 97.87 & 98.00 & 97.32 \\
        $\theta = 50$  & 97.69 & 97.90 & 97.30 & 97.15 \\
        $\theta = 60$  & 97.72 & 97.70 & 98.00 & 97.25 \\
        $\theta = 100$ & 97.68 & 97.55 & 96.90 & 96.38 \\
        Plaintext      & 98.78 & 98.88 & 98.74 & 98.75 \\
        \bottomrule
    \end{tabular}
    \label{MNISTprediction}
\end{table}

\begin{table}[h]
    \centering
    \caption{Different $\theta$ and $\tau$ parameters on the FashionMNIST dataset (\%)}
    \setlength{\tabcolsep}{15pt} 
    \begin{tabular}{c c c c c}
        \toprule
        & $\tau = 2.0$  & $\tau = 3.0$ & $\tau = 4.0$  & $\tau = \infty$ (IF) \\
        \midrule
        $\theta = 20$  & 86.38 & 85.24 & 86.47 & 82.38 \\
        $\theta = 30$  & 86.65 & 86.37 & 86.97 & 83.20 \\
        $\theta = 40$  & 87.20 & 86.85 & 85.82 & 83.38 \\
        $\theta = 50$  & 87.04 & 86.64 & 79.52 & 83.27 \\
        $\theta = 60$  & 87.29 & 81.69 & 77.21 & 82.67 \\
        $\theta = 100$ & 83.11 & 74.72 & 74.45 & 81.76 \\
        Plaintext      & 90.35 & 90.05 & 89.99 & 90.63 \\
        \bottomrule
    \end{tabular}
    \label{FashionMNISTpredection}
\end{table}

\begin{table}[h]
    \centering
    \caption{Different $\theta$ and $\tau$ parameters on the Tumor dataset (\%)}
    \setlength{\tabcolsep}{15pt} 
    \begin{tabular}{c c c c c}
        \toprule
        & $\tau = 2.0$  & $\tau = 3.0$ & $\tau = 4.0$ \\
        \midrule
        $\theta = 20$  & 81.46 & 87.95 & 83.75 \\
        $\theta = 30$  & 80.42 & 87.87 & 86.50 \\
        $\theta = 40$  & 80.78 & 87.19 & 86.04 \\
        $\theta = 50$  & 80.01 & 87.41 & 85.97 \\
        $\theta = 60$  & 81.79 & 87.53 & 85.66 \\
        $\theta = 100$ & 81.62 & 84.75 & 83.60 \\
        Plaintext      & 93.04 & 93.35 & 93.89 \\
        \bottomrule
    \end{tabular}
    \label{tumor}
\end{table}

The experimental results indicate that both the scaling factor $\theta$ and the decay coefficient $\tau$ have a significant impact on the accuracy of ciphertext predictions. Specifically, $\theta$ has a positive effect on accuracy, as larger $\theta$ values tend to improve the prediction precision within the network. However, it is crucial to ensure that the choice of $\theta$ matches the size of the information space. Otherwise, an excessively large $\theta$ could lead to data overflow, negatively affecting accuracy.

As shown in Table \ref{MNISTprediction} and Table \ref{FashionMNISTpredection}, FHE-DiCSNN achieves maximum accuracies of 98.00\% and 87.29\% on the MNIST and FashionMNIST datasets, respectively. In comparison to plaintext predictions, which reached 98.74\% and 90.35\%, our accuracy losses were less than 1\% and 3\% for the two datasets. This demonstrates that the FHE-DiCSNN model can handle ciphertext data effectively while maintaining high prediction accuracy, validating its reliability.

In addition to the public datasets mentioned above, we also tested our framework on a medical brain tumor dataset. In the medical field, patient data often carries privacy concerns, making it crucial to perform preliminary diagnostics without exposing sensitive patient information. As shown in Table \ref{tumor}, our FHE-DiCSNN framework achieved an accuracy of up to 87.95\% on encrypted Tumor data, with only a 5.4\% reduction compared to plaintext results.

\subsection{Time Consumption}
The simulation time $T$, which can be interpreted as the number of cycles required to process a single image, plays a crucial role in determining the time efficiency of the network. A larger $T$ increases the total number of cycles and consequently the number of bootstrapping operations. In Poisson coding, the inherent randomness demands a sufficiently large $T$ to achieve stable experimental results, which significantly raises the time consumption. In contrast, using a convolutional layer-based spike coding approach enables more stable feature extraction, reducing the simulation time to just 4 cycles. This ensures that Leaky Integrate-and-Fire (LIF) neurons accumulate enough membrane potential to generate spikes effectively.

Bootstrapping, known to be the most time-consuming step in Fully Homomorphic Encryption (FHE), primarily takes place during the computation of LIF neurons in FHE-DiCSNN. Therefore, the number of bootstrapping operations can also serve as a straightforward estimate for the total time consumption. The table below provides a comparison between the CSNN and an equivalent dimension Poisson-coded SNN, as defined in Fig. \ref{CSNN_main2}.

\begin{table}[h]
    \centering
    \caption{$T_1$ and $T_2$ represent the simulation time required for Poisson-encoded SNN and CSNN, respectively, to achieve their respective peak accuracy performances. Typically, $T_1$ falls within the range of [20-100], while $T_2$ falls within the range of [2-4].}
    \setlength{\tabcolsep}{15pt} 
    \huge
    \resizebox{1\textwidth}{!}{
\begin{tabular}{ccc}
   \toprule
     & Poisson-encoded SNN\cite{10462898} & CSNN \\
    \midrule
   bootstrapping & $(784 \times 2 +2\times160+2\times10)\times T_1$ & $(7840\times 2+2\times 160+2\times 10)\times T_2$  \\
   Spiking Activation Layer & $5\times T_1$ & $6\times T_2$  \\
   \bottomrule
\end{tabular}}
\label{Poisson}
\end{table}

\vspace{-15mm}

\begin{table}[h]
    \centering
    \caption{Comparison of the run times of the two models}
    \setlength{\tabcolsep}{15pt} 
    \begin{tabular}{c c c}
        \toprule
        & Multi-threading & Without multi-threading \\
        \midrule
        DiSNN\cite{10462898} & 2.62s & 8.32s \\
        Ours  & 0.34s & 8.29s \\
        \bottomrule
    \end{tabular}
    \label{runtime}
\end{table}

If we do not consider parallel computing, the number of bootstrapping can be used as a simple estimate of the network's time consumption. In this case, the time consumption of both Poisson coded SNNs and CSNNs will be very large. However, since the bootstrapping of Spiking Activation Layers and Poisson encoding can be performed in parallel, the time consumption will be proportional to the number of corresponding layers. In the case of parallel computing, CSNN exhibits a time efficiency that is 10 times higher than that of Poisson-encoded SNN. Table \ref{runtime} presents a comparison of the run times between the two models.

\subsection{Parameters Analysis}
This section discusses the selection of FHE parameters, focusing on the message space $\mathbb{Z}_p$. Here, $p$ serves as the modulus, ensuring operations remain within $\mathbb{Z}_p$. To prevent overflow during subtraction and maintain correctness, Theorem~\ref{MAXMIN} offers a criterion for the maximum allowable value.

As long as it is satisfied that
\begin{equation}
    \hat{V}_{\text{th}} + \max\limits_{t}(|\hat{I}[t]|) \approx \theta \cdot (V_{th}+\max_t (|I[t]|)) = \theta \cdot (V_{th} + \sum\omega_{j}S_j[t]) \leq \frac{p}{2}
\label{select_p}
\end{equation}
the value of the intermediate variable will not exceed the message space $\mathbb{Z}_p$. The formula indicates that $\hat{I}$ is proportional to discretization parameter $\theta$.

We estimated the true maximum value of $V_{th} + \sum\omega_{j}S_j[t]$ on the MNIST dataset, and the findings are summarized in Table \ref{SIZE_P}.



A technique from DiNN \cite{bourse2018fast} dynamically adjusts the message space to minimize computational cost and control noise growth, which we also adopt by selecting a smaller plaintext space. Accurate estimation of noise growth is crucial, as it only occurs during WeightSum operations. For a single WeightSum, the noise increases from the initial value $\sigma$ to:

\begin{equation}
\sigma \sum_j \left| \hat{w}_{j}\right| \approx \theta \cdot \sigma \sum_j \left| w_{j}\right|.
\label{noise_estimate_computing}
\end{equation}

Eq. (\ref{noise_estimate_computing}) shows that the maximum noise growth can be precomputed since weights are known. Experimental results are summarized in Table~\ref{NOISEMAX}.

\begin{table}[h]
\caption{Spiking Activation Layer Results for Different Thresholds $\tau$}
\label{SIZE_P}
\centering
\setlength{\tabcolsep}{15pt} 

\begin{tabular}{lccc}
\toprule
$V_{th} + \max_t (|I[t]|)$ &  Spiking Layer1  &  Spiking Layer2  & Spiking Layer3  \\
\midrule
$\tau=2$ & 29.60 & 70.00 & 24.90 \\
$\tau=3$ & 36.00 & 86.00 & 32.10 \\
$\tau=4$ & 85.2  & 175.00  & 51.70 \\
$\tau=\infty$ (IF) & 29.80 & 31.10 & 20.60 \\
\bottomrule
\end{tabular}
\end{table}

\vspace{-10mm}

\begin{table}[h]
\caption{Based on the noise estimation mentioned above, the quantity $\max\sum_j \left|\omega_{j}\right|$ can be employed to estimate the growth of noise in DiCSNNs for various $\theta$.}
\setlength{\tabcolsep}{15pt} 
\label{NOISEMAX}
\centering
\begin{tabular}{lcccc}
\toprule
 & $\tau=2$ & $\tau=3$ & $\tau=4$ & $\tau=\infty$ (IF) \\
\midrule
$\max\sum_j | \omega_j|$ & 17.42 & 18.87 & 10.24 & 11.89 \\
\bottomrule
\end{tabular}
\end{table}

The discussion shows that the message space size and noise upper bound are proportional to $\theta$. Experimental results provide scaling factors to estimate bounds for different $\theta$, allowing selection of suitable FHE parameters or use of standard sets like STD128 \cite{micciancio2021bootstrapping}.

\section{Conclusion}
This paper introduces the FHE-DiCSNN framework, which is built upon the efficient TFHE scheme and incorporates convolutional operations from CNN. The framework leverages the discrete nature of SNNs to achieve exceptional prediction accuracy and time efficiency in the ciphertext domain. The homomorphic computation of LIF neurons can be extended to other SNNs models, offering a novel solution for privacy protection in third-generation neural networks. Furthermore, by replacing Poisson encoding with convolutional methods, it improves accuracy and mitigates the issue of excessive simulation time caused by randomness. Parallelizing the bootstrapping computation through engineering techniques significantly enhances computational efficiency. Additionally, we provide upper bounds on the maximum value of homomorphic encryption and the growth of noise, supported by experimental results and theoretical analysis, which guide the selection of suitable homomorphic encryption parameters and validate the advantages of the FHE-DiCSNN framework.

There are also promising avenues for future research: 1. Exploring homomorphic computation of non-linear spiking neuron models, such as QIF and EIF. 2. Investigating alternative encoding methods to completely alleviate simulation time concerns for SNNs. 3. Exploring intriguing extensions, such as combining SNNs with RNNs or reinforcement learning and homomorphically evaluating these AI algorithms.

\section{Acknowledgements}
This work was supported by the National Natural Science Foundation of China (12401233), NSFC International Creative Research Team (W2541005), National Key Research and Development Program of China (2021ZD0201300), Guangdong-Dongguan Joint Research Fund (2023A1515140016), Interdisciplinary Research Program of HUST (2023JCYJ012), Guangdong Provincial Key Laboratory of Mathematical and Neural Dynamical Systems (2024B1212010004), Guangdong Major Project of Basic Research (2025B0303000003), and Hubei Key Laboratory of Engineering Modeling and Scientific Computing.

%

%
%
%
\bibliographystyle{splncs04}
%

\bibliography{ref}

\end{document}